\documentclass[preprint,11pt]{elsarticle}

\usepackage{amssymb}
\usepackage{amsmath}
\usepackage{amsthm}

\usepackage{array}
\usepackage{cases}
\usepackage{helvet}
\usepackage{multirow}
\usepackage{amsfonts}
\usepackage[english]{babel}
\usepackage{graphicx}
\usepackage{url}
\usepackage{bm}
\usepackage{enumitem} 
\usepackage{multirow}
\usepackage{booktabs}
\usepackage{algorithm}
\usepackage{algorithmic}
\usepackage{booktabs}
\usepackage{threeparttable}
\usepackage{graphicx}
\usepackage{subcaption} 
\usepackage{natbib}
\usepackage{epstopdf}
\usepackage{setspace}

\usepackage{geometry}
\numberwithin{equation}{section}

\journal{Journal of Mathematical Analysis and Applications}

\begin{document}

\newtheorem{definition}{$\mathbf{Definition}$}[section]
\newtheorem{theorem}{$\mathbf{Theorem}$}[section]
\newtheorem{corollary}{$\mathbf{Corollary}$}[section]
\newtheorem{lemma}{$\mathbf{Lemma}$}[section]
\newtheorem{proposition}{$\mathbf{Proposition}$}[section]
\newtheorem{property}{$\mathbf{Property}$}[section]
\newtheorem{remark}{$\mathbf{Remark}$}[section]
\newtheorem{example}{$\mathbf{Example}$}[section]

\begin{frontmatter}

\title{A Law of Large Numbers with Convergence Rate based on Nonlinear Expectation Theory and Its Application to Communication Detection}

\author[label1]{Jialiang Fu} \ead{fujialiang@amss.ac.cn}
\author[label1,label2]{Wen-Xuan Lang} \ead{langwx@amss.ac.cn}

\affiliation[label1]{organization={Academy of Mathematics and Systems Science, Chinese Academy of Sciences},
    city={Beijing},
    postcode={100190}, 
    country={China}}
\affiliation[label2]{organization={National Center for Mathematics and Interdisciplinary Sciences, Chinese Academy of Sciences},
    city={Beijing},
    postcode={100190}, 
    country={China}}

\date{} 

\begin{abstract}
In this paper, we establish a new law of large numbers with the rate of convergence for special partial sums in a probability space. The proof relies on nonlinear expectation theory, as the uncertainty of random variables in the special partial sums induces the sublinearity of the expectation. As an application, we apply the new theorem to analyze the feedback channel-based detection problem of non-i.i.d. input signals in communication systems. Specifically, we investigate the convergence rates of the upper probabilities of the detection errors within the sublinear expectation space.
\end{abstract}

\begin{keyword}
Model uncertainty; Sublinear expectation; Law of large numbers; Convergence rate; Detection
\end{keyword}

\end{frontmatter}



\section{Introduction}\label{sec1}

Nonlinear expectation theory, first introduced by Peng \cite{peng2007G,peng2007}, provides a powerful framework to address uncertainty in probability models, making it a valuable tool in fields such as finance and statistical analysis. A typical nonlinear expectation is the sublinear expectation, under which the expectation is no longer linear but sublinear. Based on the representation theorem, under certain conditions, a sublinear expectation can be expressed as the supremum of a family of linear expectations. By incorporating the distributional uncertainty, this theory offers a more flexible and robust approach for modeling complex systems.

Two of the fundamental results of nonlinear expectation theory are the law of large numbers (LLN) and the central limit theorem (CLT) under sublinear expectations. Peng \cite{Peng19,peng2008new,peng2007law} introduced the concepts of independence and identical distribution for random variables under sublinear expectations, and established a CLT and a weak form of LLN for independent and identically distributed (i.i.d.) random variables in this framework, which demonstrate that the maximal distribution and the $G$-normal distribution are fundamental and widely applicable in nonlinear expectation theory. Chen \cite{Ch16} established a strong form of LLN for i.i.d. sequences under sublinear expectations, investigating the cluster point behavior of the means of the partial sums, and proved a central limit theorem \cite{chen2022a} for the sequence of random variables under the assumption that the probability measure family is rectangular (see \cite{epstein2003recursive} for more). While sample sizes in practice may often be large or seem sufficient for the purposes to handle, the accuracy of the approximations under LLN and CLT can still vary depending on the sample size and other factors. This underscores the importance of assessing the quality of these approximations.

In recent years, there has been considerable literature on estimating convergence rates of limit theorems under sublinear expectations. Song \cite{Song20, Song21} obtained the estimation of convergence rates for Peng's CLT and LLN by Stein's method, where the latter is that for an i.i.d. sequence $\{X_k\}_{k=1}^\infty$ under a sublinear expectation $\mathbb{E}$ with $\underline{\mu}=-\mathbb{E}[-X_1]$, $\overline{\mu}=\mathbb{E}[X_1]$, $\mathbb{E}[|X_1|^2]<\infty$ and for each $\phi\in C_{Lip}(\mathbb{R})$ with Lipschitz constant $L_{\phi}$,
\begin{equation}\label{cr_song}
\bigg|\mathbb{E}\left[\phi\left(  \frac{X_1+\cdots+X_n}{n}\right) \right] -\sup_{y\in[\underline{\mu},\overline{\mu}]}\phi(y)\bigg|\leq L_{\phi}\frac{C}{\sqrt{n}},
\end{equation}
where $C$ is a constant depending only on $\mathbb{E}[|X_1|^2]$. Hu et al. \cite{Hu21} improved the result presented in \cite{Song21}. By the improved result we can know that if $\{X_k\}_{k=1}^\infty$ is an i.i.d. sequence under a sublinear expectation $\mathbb{E}$ with $\underline{\mu}=-\mathbb{E}[-X_1]$, $\overline{\mu}=\mathbb{E}[X_1]$, $C_{\alpha}:=\mathbb{E}[|X_1|^{1+\alpha}]<\infty$ for some $\alpha\in(0,1]$, then for each $\varphi\in C_{Lip}(\mathbb{R})$ with Lipschitz constant $L_{\varphi}$,
\begin{equation}\label{cr_hu}
\left|\mathbb{E}\left[\varphi\left(\frac{X_1+\cdots+X_n}{n}\right)\right]-\max_{r\in[\underline{\mu},\overline{\mu}]}\varphi(r)\right|\leq L_{\varphi}\left(\frac{4C_{\alpha}}{n^\alpha}\right)^{\frac{1}{1+\alpha}}.
\end{equation}
Comparing these two results, (\ref{cr_song}) obtained the $O(1/\sqrt{n})$ convergence rate under $\mathbb{E}[|X_1|^2]<\infty$, while (\ref{cr_hu}) obtained the $O(1/n^{\frac{\alpha}{1+\alpha}})$ convergence rate under the weaker condition, $\mathbb{E}[|X_1|^{1+\alpha}]<\infty$ for some $\alpha\in(0,1]$. Fang et al. \cite{Fang19} referenced the convergence rate from \cite{Song20} to give an approximation and a representation of the $G$-normal distribution, and derived a new central limit theorem with the convergence rate for the suitably normalized partial sums of i.i.d. random variables in a probability space. Recently, there has also been literature on LLNs for non-independent sequences under sublinear expectations, such as $m$-dependent sequences (Gu and Zhang \cite{Gu24}), blockwise $m$-dependent sequences (Fu \cite{Fu25a}), pseudo-independent sequences (Guo and Li \cite{Guo21}, Fu \cite{Fu25b}), and others. Corresponding to the CLT in \cite{Fang19}, in this paper, we utilize the estimation of the convergence rate for Peng's LLN in \cite{Hu21} and combine the technique of measurable approximations from \cite{Fang19}, to establish a new law of large numbers with the convergence rate for means of special partial sums $\frac{(\zeta_1Z_1+K_1)^2\cdots+(\zeta_nZ_n+K_n)^2}{n}$ in a probability space. The sequence $\zeta:=\{\zeta_i\}_{i=1}^\infty$ is uncertain and every $\zeta_i$ lies within the interval $[\underline{\zeta},\overline{\zeta}]$, which results in the sequence $\{(\zeta_iZ_i+K_i)^2\}_{i=1}^\infty$ being non-i.i.d. The proof of our new law of large numbers requires sublinear expectation, where the sublinearity arises from the uncertainty associated with the bounded predictable sequences.

Moreover, to demonstrate the practical significance of our result, we apply the new law of large numbers to the feedback channel-based detection problem of non-i.i.d. input signals in communication systems. The law of large numbers proposed in Section \ref{sec3} (Theorem \ref{thm1}) has an explicit convergence rate. Moreover, it directly assists with the hypothesis-testing problem formulated in this paper and provides a new perspective for future research on laws of large numbers. Since the channel features feedback, the input signals inherently constitute a predictable sequence presented in this paper. Additionally, non-i.i.d. input signals have broader potential applications, as it is challenging for real-world signals to satisfy the i.i.d. assumption commonly used in probability theory.

The rest of this paper is organized as follows. In Section \ref{sec2}, we review the basic notions and results in nonlinear expectation theory, and recall the large deviation theory required for the application. In Section \ref{sec3}, we establish a new law of large numbers with the rate of convergence, and provide a detailed proof. In Section \ref{sec4}, an application of our law of large numbers is given to the feedback channel-based detection problem.

\section{Preliminaries}\label{sec2}

\subsection{Nonlinear Expectation Theory}\label{sec2.1}

In this subsection, we present some concepts and existing results on nonlinear expectation theory. One can refer to \cite{Peng19} for more details about nonlinear expectation theory.

Let $(\Omega,\mathcal{F})$ be a given measurable space and $\mathcal{H}$ be a linear space of real-valued functions defined on $(\Omega,\mathcal{F})$ such that if $X_1,\cdots,X_n\in\mathcal{H}$, then $\varphi(X_1,\cdots,X_n)\in\mathcal{H}$ for all $\varphi\in B_b(\mathbb{R}^n)$ or $C_{l,Lip}(\mathbb{R}^n)$, where $B_b(\mathbb{R}^n)$ denotes the space of bounded Borel measurable functions on $\mathbb{R}^n$, and $C_{l,Lip}(\mathbb{R}^n)$ denotes the collection of function $f$ on $\mathbb{R}^n$ satisfying:
\begin{equation}
|f(\mathbf{x})-f(\mathbf{y})|\leq C(1+|\mathbf{x}|^m+|\mathbf{y}|^m)|\mathbf{x}-\mathbf{y}| \ \text{for}\ \mathbf{x},\mathbf{y}\in\mathbb{R}^n,\notag
\end{equation}
where $C>0$ and $m\in\mathbb{N}$ rely on $f$.
$C_{Lip}(\mathbb{R}^n)$ is denoted as the space of Lipschitz continuous functions on $\mathbb{R}^n$. 

\begin{definition}\label{def:sublinear expectation}
	A \textit{sublinear expectation} $\overline{\mathbb{E}}:\mathcal{H} \rightarrow \bar{\mathbb{R}}$ is a functional defined on the space $\mathcal{H}$ satisfying the following properties:
	\begin{itemize}
		\item[(i)] Monotonicity: $\overline{\mathbb{E}}[X]\leq\overline{\mathbb{E}}[Y]$ if $X\leq Y$;
		\item[(ii)] Constant preserving: $\overline{\mathbb{E}}[c]=c$ for $c\in\mathbb{R}$;
		\item[(iii)] Sub-additivity: $\overline{\mathbb{E}}[X+Y]\leq\overline{\mathbb{E}}[X]+\overline{\mathbb{E}}[Y]$ whenever $\overline{\mathbb{E}}[X]+\overline{\mathbb{E}}[Y]$ is not of the form $+\infty-\infty$ or $-\infty+\infty$;
		\item[(iv)] Positive homogeneity: $\overline{\mathbb{E}}[\lambda X]=\lambda\overline{\mathbb{E}}[X]$ for $\lambda\geq0$.
	\end{itemize}
	Here, $\bar{\mathbb{R}}\triangleq[-\infty,\infty]$ and $0\cdot\infty\triangleq 0$.
	
    The triplet $(\Omega,\mathcal{H},\overline{\mathbb{E}})$ is called a sublinear expectation space. If $(i)$ and $(ii)$ are satisfied, then $\overline{\mathbb{E}}$ is called a nonlinear expectation and $(\Omega,\mathcal{H},\overline{\mathbb{E}})$ is called a nonlinear expectation space.
\end{definition}

Let $\mathcal{M}$ be the collection of all probability
measures on $(\Omega,\mathcal{F})$. For a subset $\mathcal{P}\subseteq\mathcal{M}$, the upper expectation with respect to $\mathcal{P}$ is defined by
\begin{equation}\label{eq1}
\mathbb{E}[X]:=\sup_{P\in\mathcal{P}}E_P[X],\ X\in\mathcal{H}.
\end{equation}
The functional $\mathbb{E}$ defined by \eqref{eq1} is a sublinear expectation satisfying Definition~\ref{def:sublinear expectation}.

We say $\mathbf{X}=(X_1,\cdots,X_n)$ is a $n$-dimensional random vector in a sublinear expectation space $(\Omega,\mathcal{H},\mathbb{E})$ ($\mathbf{X}\in\mathcal{H}^n$ for short) if each coordinate of $\mathbf{X}$ is in $(\Omega,\mathcal{H},\mathbb{E})$. The following two definitions are the notions of identical distribution and independence for random variables under the framework of sublinear expectations in this paper.

\begin{definition}
	Let $(\Omega_1,\mathcal{H}_1,\mathbb{E}_1)$ and $(\Omega_2,\mathcal{H}_2,\mathbb{E}_2)$ be two sublinear expectation spaces. A n-dimensional random vector $\mathbf{X}_1$ in $(\Omega_1,\mathcal{H}_1,\mathbb{E}_1)$ is said to be identically distributed with another n-dimensional random vector $\mathbf{X}_2$ in $(\Omega_2,\mathcal{H}_2,\mathbb{E}_2)$, denoted by $\mathbf{X}_1\overset{d}{=}\mathbf{X}_2$, if for all $\varphi\in C_{l,Lip}(\mathbb{R}^n)$,
	\begin{equation}
	\mathbb{E}_1[\varphi(\mathbf{X}_1)]=\mathbb{E}_2[\varphi(\mathbf{X}_2)].\notag
	\end{equation}
	
	A sequence of random variables $\{X_n\}_{n=1}^\infty$ in $(\Omega,\mathcal{H},\mathbb{E})$ is said to be identically distributed if $X_i\overset{d}{=}X_1$ for each $i\geq 1$.
\end{definition}

\begin{definition}
	Let $(\Omega,\mathcal{H},\mathbb{E})$ be a sublinear expectation space. A random vector $\mathbf{Y}=(Y_1,\cdots,Y_n)\in \mathcal{H}^n$ is said to be independent of another random vector $\mathbf{X}=(X_1,\cdots,X_m)\in\mathcal{H}^m$ under $\mathbb{E}$ if
	\begin{equation}
	\mathbb{E}[\varphi(\mathbf{X},\mathbf{Y})]=\mathbb{E}[\mathbb{E}[\varphi(\mathbf{x},\mathbf{Y})]|_{\mathbf{x}=\mathbf{X}}],\notag
	\end{equation}
	for each function $\varphi\in C_{l,Lip}(\mathbb{R}^{m+n})$ such that $\hat{\varphi}(\mathbf{x}):=\mathbb{E}[|\varphi(\mathbf{x}, \mathbf{Y})|]<\infty$ for all $\mathbf{x}\in\mathbb{R}^m$, and $\mathbb{E}[\varphi(\mathbf{x},\mathbf{Y})]|_{\mathbf{x}=\mathbf{X}}\in\mathcal{H}$, $\hat{\varphi}(\mathbf{X})\in\mathcal{H}$, $\mathbb{E}[\hat{\varphi}(\mathbf{X})]<\infty$.
	
	A sequence of random variables $\{X_n\}_{n=1}^\infty$ in $(\Omega,\mathcal{H},\mathbb{E})$ is said to be independent if $X_{i+1}$ is independent of $(X_1,\cdots,X_i)$ for each $i\geq 1$.
\end{definition}
It is worth noting that under the framework of sublinear expectations, a random variable $Y$ is independent of another random variable $X$ can not in general imply that $X$ is independent of $Y$, which is different from traditional independence in classical probability theory. One can refer to the Example 1.3.15 in \cite{Peng19} for more details. The maximal distribution generally serves as the limit distribution of the LLN under sublinear expectations and its original definition is Definition 2.2.1 in \cite{Peng19}. The following is the definition of maximal distribution used in this paper.
\begin{definition}
	A random variable $X$ on a sublinear expectation space $(\Omega,\mathcal{H},\mathbb{E})$ is called maximally distributed if there exist $a,b\in\mathbb{R}$ such that 
	\begin{equation}
		\mathbb{E}[\varphi(X)]=\max_{r\in[a,b]}\varphi(r),\ \varphi\in C_{Lip}(\mathbb{R}).\notag
	\end{equation}
\end{definition}

The following is a LLN with a convergence rate under a sublinear expectation, as established in \cite{Hu21}.
\begin{theorem}\label{th1}
	Let $\{X_i\}_{i=1}^\infty$ be the independent random variables on a sublinear expectation space $(\Omega,\mathcal{H},\mathbb{E})$, with $\mathbb{E}[X_i]=\overline{\mu}$ and $-\mathbb{E}[-X_i]=\underline{\mu}$ for $i\geq 1$. Let $S_n=X_1+\cdots+X_n$ for each $n\geq 1$. Assume there exists $\alpha\in (0,1]$ such that $C_{\alpha}:=\underset{i\geq 1}{\sup}\ \mathbb{E}[|X_i|^{1+\alpha}]<\infty$. Then, for each $\varphi\in C_{Lip}(\mathbb{R})$ with Lipschitz constant $L_{\varphi}$, there is
	\begin{equation}
	\left|\mathbb{E}\left[\varphi\left(\frac{S_n}{n}\right)\right]-\max_{r\in[\underline{\mu},\overline{\mu}]}\varphi(r)\right|\leq L_{\varphi}\left(\frac{4C_{\alpha}}{n^\alpha}\right)^{\frac{1}{1+\alpha}} \ \text{for each}\ n\geq 1.
	\end{equation}
\end{theorem}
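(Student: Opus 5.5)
The plan is to run the standard ``nonlinear PDE + telescoping'' argument for Peng's LLN, with a mollification parameter $\delta$ optimized at the end. First set $G(p):=\max\{p\overline{\mu},p\underline{\mu}\}=\mathbb{E}[pX_i]$ for every $i$, and define the value function
\begin{equation*}
u(t,x):=\max_{r\in[\underline{\mu},\overline{\mu}]}\varphi\bigl(x+(1-t)r\bigr),\qquad t\in[0,1],\ x\in\mathbb{R}.
\end{equation*}
Then $u(1,x)=\varphi(x)$ and $u(0,0)=\max_{r\in[\underline{\mu},\overline{\mu}]}\varphi(r)$. Moreover $u(t,\cdot)$ is $L_\varphi$-Lipschitz, and $u$ satisfies the exact dynamic programming identity $u(s,x)=\max_{r}u(t,x+(t-s)r)$ for $s\le t$, since $[\underline{\mu},\overline{\mu}]$ is convex. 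Because $u$ is not smooth, I will replace it by $u_\delta(t,\cdot):=u(t,\cdot)*\rho_\delta$ for a standard mollifier $\rho_\delta$ in $x$. This gives $|u_\delta-u|\le L_\varphi\delta$, $|\partial_x u_\delta|\le L_\varphi$, and $|\partial_{xx}u_\delta|\le cL_\varphi/\delta$.

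Next I telescope along the partial sums. Writing $t_k=k/n$, we have
\begin{equation*}
u_\delta\Bigl(1,\frac{S_n}{n}\Bigr)-u_\delta(0,0)=\sum_{k=1}^n\Bigl[u_\delta\Bigl(t_k,\frac{S_{k-1}+X_k}{n}\Bigr)-u_\delta\Bigl(t_{k-1},\frac{S_{k-1}}{n}\Bigr)\Bigr].
\end{equation*}
Using the independence of $X_k$ from $(X_1,\dots,X_{k-1})$, I evaluate $\mathbb{E}$ from the innermost step outward. The key one-step estimate, uniform in $x$, is
\begin{equation*}
\Bigl|\mathbb{E}\Bigl[u_\delta\Bigl(t_k,x+\frac{X_k}{n}\Bigr)\Bigr]-u_\delta(t_{k-1},x)\Bigr|\le c\,L_\varphi\Bigl(\frac{\mathbb{E}|X_k|^{1+\alpha}}{n^{1+\alpha}\delta^{\alpha}}+\frac{1}{n^2\delta}\Bigr).
\end{equation*}
This bound then propagates through the nested expectations by monotonicity and constant preservation. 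Summing the $n$ steps and adding the two mollification errors $2L_\varphi\delta$ gives
\begin{equation*}
\Bigl|\mathbb{E}\Bigl[\varphi\Bigl(\frac{S_n}{n}\Bigr)\Bigr]-\max_{r\in[\underline{\mu},\overline{\mu}]}\varphi(r)\Bigr|\le L_\varphi\Bigl(2\delta+\frac{cC_\alpha}{n^{\alpha}\delta^\alpha}+\cdots\Bigr).
\end{equation*}
Choosing $\delta=(C_\alpha/n^\alpha)^{1/(1+\alpha)}$ balances the terms and yields the rate $(4C_\alpha/n^\alpha)^{1/(1+\alpha)}$. Obtaining the exact constant $4$ requires careful bookkeeping that I would fix last, possibly by using sup/inf-convolution instead of $\rho_\delta$.

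The one-step estimate is where the work lies, and I would split it into two parts. For the \emph{spatial part}, Taylor's formula gives
\begin{equation*}
u_\delta(t,x+y)=u_\delta(t,x)+\partial_xu_\delta(t,x)\,y+R,\qquad |R|\le\min\{2L_\varphi|y|,\ cL_\varphi y^2/\delta\}\le cL_\varphi|y|^{1+\alpha}\delta^{-\alpha}.
\end{equation*}
Since the linear term is a constant times $X_k$, sublinearity gives $\mathbb{E}[\partial_xu_\delta\,X_k/n+R]=G(\partial_xu_\delta)/n\pm\mathbb{E}|R|$. For the lower direction I would use $\mathbb{E}[A+B]\ge\mathbb{E}[A]-\mathbb{E}[-B]$. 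For the \emph{temporal part}, I need $u_\delta(t_k,x)+G(\partial_xu_\delta(t_k,x))/n$ to be within $O(L_\varphi/(n^2\delta))$ of $u_\delta(t_{k-1},x)$. Here I would invoke the dynamic programming identity: $u(t_{k-1},x)=\max_r u(t_k,x+r/n)$.

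The main obstacle is the temporal part, because mollification does not commute with the maximum. The inequality $u_\delta(t_{k-1},x)\ge\max_r u_\delta(t_k,x+r/n)$ is immediate. The reverse inequality is not automatic, since the maximizing $r$ depends on the mollification variable. I would first try to exploit the fact that $\max_r$ is over a line segment, so that a second-order Taylor expansion of $u_\delta(t_k,\cdot)$ around $x$ controls the discrepancy by $O(\max(|\underline{\mu}|,|\overline{\mu}|)^2L_\varphi/(n^2\delta))$. If this fails, I would avoid the issue by working with the semiconvex regularization (inf-sup convolution) of $u$, for which the comparison is one-sided in the correct direction. I would then prove the upper and lower bounds separately, handling the lower bound by the same argument with roles reversed.
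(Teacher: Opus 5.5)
The paper does not prove this statement; it quotes it from Hu et al., so your argument has to stand on its own. The upper half does stand. Since $u_\delta(t_{k-1},\cdot)\ge\max_r u_\delta(t_k,\cdot+r/n)$ and $u_\delta(t_k,\cdot)$ is $C^{1,1}$, the Taylor step at $x$ gives $\mathbb{E}[u_\delta(t_k,x+X_k/n)]\le u_\delta(t_{k-1},x)+\mathrm{err}$ uniformly in $x$. This yields $\mathbb{E}[\varphi(S_n/n)]-\max_{[\underline{\mu},\overline{\mu}]}\varphi\le cL_\varphi(C_\alpha/n^\alpha)^{1/(1+\alpha)}$ for some unspecified $c$. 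The lower half fails: your two-sided one-step estimate is false, and so is the hoped-for $O(L_\varphi/(n^2\delta))$ control of the temporal part. Take $\varphi(x)=|x|$ and $X_k$ i.i.d.\ maximally distributed on $[-1,1]$. Then $G(p)=|p|$, $\mathbb{E}|X_k|^{1+\alpha}=1$, $u(t,x)=|x|+1-t$, and $u_\delta(t,x)=a_\delta(x)+1-t$ with $a_\delta:=|\cdot|*\rho_\delta$ even, convex and $a_\delta'(0)=0$. Since $\partial_xu_\delta(t_k,0)=0$, you get $u_\delta(t_k,0)+G(\partial_xu_\delta(t_k,0))/n=u_\delta(t_{k-1},0)-1/n$ exactly. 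Moreover
\begin{align*}
\mathbb{E}\Bigl[u_\delta\Bigl(t_k,\frac{X_k}{n}\Bigr)\Bigr]-u_\delta(t_{k-1},0)&=\max_{|r|\le1}u_\delta\Bigl(t_k,\frac{r}{n}\Bigr)-u_\delta(t_{k-1},0)\\
&=a_\delta\Bigl(\frac1n\Bigr)-a_\delta(0)-\frac1n=-\frac1n+O\Bigl(\frac{1}{n^2\delta}\Bigr),
\end{align*}
while $n$ times your right-hand side is $c\bigl((n\delta)^{-\alpha}+(n\delta)^{-1}\bigr)\to0$ for your $\delta$. Mollifying after the maximum gains a full $1/n$ at a convex kink, where the optimal $r$ switches sign, so no second-order Taylor expansion can recover it. A loss of order $1/n$ at each of the $n$ steps makes your uniform-in-$x$ propagation vacuous.

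The regularization fallback does not repair this as described. Any monotone regularization $T$ commuting with translations satisfies $T\bigl(\max_r v(\cdot+r/n)\bigr)\ge\max_r (Tv)(\cdot+r/n)$, which is the wrong direction for the lower bound. The only escape is a $T$ that commutes with the maximum, such as the sup-convolution. But the sup-convolution only produces a semiconvex $w$ (it leaves convex kinks in place). The remaining step of your template compares $w(t_k,x)+G(p)/n$, with $p\in\partial w(t_k,x)$, against $\max_r w(t_k,x+r/n)$, and this needs an upper quadratic bound that $w$ lacks. For instance, with $w(t_k,y)=|y-\frac{1}{2n}|$, $[\underline{\mu},\overline{\mu}]=[1,2]$ and $x=0$, one gets $w(t_k,0)+G(-1)/n=-\frac{1}{2n}$ but $\max_r w(t_k,r/n)=\frac{3}{2n}$.

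The missing idea is to Taylor-expand at the shifted point $x+r/n$ rather than at $x$, and to use $G(p)\ge pr$ for $r\in[\underline{\mu},\overline{\mu}]$; then no temporal comparison is needed at all. Let $r^*$ maximize $\varphi$ on $[\underline{\mu},\overline{\mu}]$, and set $\varphi_\delta:=\varphi*\rho_\delta$, $w(t,x):=\varphi_\delta(x+(1-t)r^*)$, $y:=x+(1-t_{k-1})r^*$, $p:=\varphi_\delta'(y)$. With $R(s):=\min\{2L_\varphi|s|,cL_\varphi s^2/\delta\}$ as in your spatial step,
\begin{align*}
\mathbb{E}\Bigl[w\Bigl(t_k,x+\frac{X_k}{n}\Bigr)\Bigr]&=\mathbb{E}\Bigl[\varphi_\delta\Bigl(y+\frac{X_k-r^*}{n}\Bigr)\Bigr]\ge w(t_{k-1},x)+\frac{G(p)-pr^*}{n}-\mathbb{E}\Bigl[R\Bigl(\frac{X_k-r^*}{n}\Bigr)\Bigr]\\
&\ge w(t_{k-1},x)-\mathbb{E}\Bigl[R\Bigl(\frac{X_k-r^*}{n}\Bigr)\Bigr]
\end{align*}
uniformly in $x$. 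Telescoping then gives the lower bound, with the remainder controlled by $\mathbb{E}|X_k-r^*|^{1+\alpha}\le 2^{1+\alpha}C_\alpha$, using $|r^*|\le\mathbb{E}|X_k|\le C_\alpha^{1/(1+\alpha)}$. Even then your sketch only delivers the rate with an unspecified constant; the stated constant $4$ is not established.
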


\subsection{Large Deviations}\label{sec2.2}
For a random variable $X$ in a probability space $(\Omega,\mathcal{F},P)$ with $\overline{x}:=E_P[X]<\infty$, the logarithmic moment generating function of $X$ is defined as 
\begin{equation}
\Lambda(\lambda):=\log E_P[e^{\lambda X}],\ \lambda\in\mathbb{R}.\notag
\end{equation}
The Fenchel-Legendre transform of $\Lambda(\lambda)$ is
\begin{equation}
\Lambda^*(x):=\sup_{\lambda\in\mathbb{R}}\{\lambda x-\Lambda(\lambda)\},\ x\in\mathbb{R}.\notag
\end{equation}
The following are some properties of the Fenchel-Legendre transform $\Lambda^*(x)$ in Lemma 2.2.5 in \cite{Dembo09}.
\begin{lemma}\label{le3}
	\begin{itemize}
		\item[(i)]$\Lambda^*(\overline{x})=\underset{x\in\mathbb{R}}{\inf}\Lambda^*(x)=0$.
		\item[(ii)]If $\Lambda(\lambda)<\infty$ for some $\lambda>0$, then for all $x>\overline{x}$,
		\begin{equation}
		\Lambda^*(x)=\sup_{\lambda\geq0}\{\lambda x-\Lambda(\lambda)\},\notag
		\end{equation}
		and $\Lambda^*(x)$ is a nondecreasing function on $(\overline{x},\infty)$.
		\item[(iii)]If $\Lambda(\lambda)<\infty$ for some $\lambda<0$, then for all $x<\overline{x}$,
		\begin{equation}
		\Lambda^*(x)=\sup_{\lambda\leq0}\{\lambda x-\Lambda(\lambda)\},\notag
		\end{equation}
		and $\Lambda^*(x)$ is a nonincreasing function on $(-\infty,\overline{x})$.
	\end{itemize}
\end{lemma}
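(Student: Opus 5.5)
The plan is to derive all three items from one inequality: Jensen's inequality applied to the convex function $t\mapsto e^{\lambda t}$. Throughout, $\overline{x}=E_P[X]$ is taken to be finite. In (ii) and (iii) the hypothesis that $\Lambda$ is finite at some $\lambda\neq 0$ is only used to keep the relevant quantities finite and $\overline{x}$ well defined; the convex-analytic argument does not otherwise depend on it.

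\textbf{Key inequality.} For every $\lambda\in\mathbb{R}$, Jensen's inequality gives $E_P[e^{\lambda X}]\geq e^{\lambda \overline{x}}$. Taking logarithms, this says $\Lambda(\lambda)\geq \lambda\overline{x}$, which also holds trivially when $\Lambda(\lambda)=+\infty$. Since $\Lambda(0)=\log 1=0$, the choice $\lambda=0$ in the definition of $\Lambda^*$ gives $\Lambda^*(x)\geq 0$ for all $x$.

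\textbf{Item (i).} At $x=\overline{x}$, the key inequality gives $\lambda\overline{x}-\Lambda(\lambda)\leq 0$ for every $\lambda$. Hence $\Lambda^*(\overline{x})\leq 0$. Combined with $\Lambda^*\geq 0$, this yields $\Lambda^*(\overline{x})=0=\inf_{x}\Lambda^*(x)$.

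\textbf{Item (ii).} Fix $x>\overline{x}$ and take any $\lambda<0$. The key inequality gives
\[
\lambda x-\Lambda(\lambda)\leq \lambda(x-\overline{x})<0=0\cdot x-\Lambda(0).
\]
So negative $\lambda$ never beat $\lambda=0$, and the supremum may be restricted to $\lambda\geq 0$.

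For monotonicity on $(\overline{x},\infty)$, note that for each fixed $\lambda\geq 0$ the map $x\mapsto \lambda x-\Lambda(\lambda)$ is nondecreasing. A pointwise supremum of nondecreasing functions is nondecreasing. By the restricted representation just proved, $\Lambda^*$ is therefore nondecreasing on $(\overline{x},\infty)$.

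\textbf{Item (iii).} This is the mirror image of (ii). Either repeat the argument with the roles of the signs swapped, or apply (ii) to $-X$. The second route uses three facts: the log moment generating function of $-X$ is $\lambda\mapsto\Lambda(-\lambda)$, its Fenchel-Legendre transform is $x\mapsto\Lambda^*(-x)$, and its mean is $-\overline{x}$.

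\textbf{Main obstacle.} The only point needing care is handling infinite values. If $\Lambda(\lambda)=+\infty$, the term $\lambda x-\Lambda(\lambda)$ equals $-\infty$ and is harmless in the supremum. One must also make sure Jensen's inequality is applied with $\overline{x}$ finite, which is exactly the standing assumption.
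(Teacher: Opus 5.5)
Your proof is correct and is essentially the standard argument behind Lemma 2.2.5 of Dembo--Zeitouni, which the paper cites rather than proves. In both, Jensen's inequality gives $\Lambda(\lambda)\geq\lambda\overline{x}$, which yields (i) and lets you discard $\lambda<0$ in (ii); monotonicity follows as a supremum of nondecreasing affine maps, and (iii) follows by symmetry.

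The only difference is that Dembo--Zeitouni also treat infinite $\overline{x}$, and there the finiteness of $\Lambda$ at some $\lambda\neq0$ is genuinely needed. Under your finite-mean assumption, which the paper also implicitly makes since (i) evaluates $\Lambda^*(\overline{x})$, that hypothesis is superfluous, as your argument shows.
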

\begin{lemma}\label{le4}
	Let $\{X_i\}_{i=1}^\infty$ be an i.i.d. sequence with $\overline{x}:=E_P[X_1]<\infty$ in $(\Omega,\mathcal{F},P)$. Let $\Lambda(\lambda)$ be the logarithmic moment generating function of $X_1$ and $\Lambda^*(x)$ be the Fenchel-Legendre transform of $\Lambda(\lambda)$. Set $W_n:=\frac{X_1+\cdots+X_n}{n}$. If $\Lambda(\lambda)<\infty$ for some $\lambda>0$, then for all $x>\overline{x}$,
	\begin{equation}
	P(W_n\geq x)\leq e^{-n\Lambda^*(x)}.\notag
	\end{equation}  
\end{lemma}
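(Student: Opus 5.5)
The statement is the classical Chernoff bound. I will prove it with the exponential Markov inequality followed by optimization over $\lambda\ge 0$, using Lemma~\ref{le3}(ii) to identify the optimized exponent with $\Lambda^*(x)$.

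\textbf{Step 1 (exponential Markov).} Fix $x>\overline{x}$ and $\lambda\geq 0$. Since $t\mapsto e^{\lambda t}$ is nondecreasing and nonnegative,
\begin{equation}
P(W_n\geq x)\leq P\big(e^{\lambda nW_n}\geq e^{\lambda n x}\big)\leq e^{-\lambda n x}E_P\big[e^{\lambda(X_1+\cdots+X_n)}\big].\notag
\end{equation}
For $\lambda=0$ this reads $P(W_n\ge x)\le 1$, which is trivially true.

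\textbf{Step 2 (independence).} Because the $X_i$ are i.i.d. under $P$, the expectation of the product of the nonnegative variables $e^{\lambda X_i}$ factorizes:
\begin{equation}
E_P\big[e^{\lambda(X_1+\cdots+X_n)}\big]=\prod_{i=1}^n E_P[e^{\lambda X_i}]=e^{n\Lambda(\lambda)}.\notag
\end{equation}
This identity holds in $[0,\infty]$. If $\Lambda(\lambda)=+\infty$ the resulting bound is trivially true. Hence for every $\lambda\ge 0$,
\begin{equation}
P(W_n\geq x)\leq e^{-n(\lambda x-\Lambda(\lambda))}.\notag
\end{equation}

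\textbf{Step 3 (optimize).} The left side does not depend on $\lambda$, so take the infimum of the right side over $\lambda\ge0$:
\begin{equation}
P(W_n\ge x)\le \exp\Big(-n\sup_{\lambda\ge0}\{\lambda x-\Lambda(\lambda)\}\Big).\notag
\end{equation}
Since $\Lambda(\lambda)<\infty$ for some $\lambda>0$ and $x>\overline{x}$, Lemma~\ref{le3}(ii) gives $\sup_{\lambda\ge0}\{\lambda x-\Lambda(\lambda)\}=\Lambda^*(x)$. This yields the claim.

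\textbf{Main obstacle.} The delicate point is the restriction to $\lambda\ge0$. Markov's inequality only works with nondecreasing exponentials, so negative $\lambda$ cannot be used directly. Lemma~\ref{le3}(ii), which is assumed, is exactly what shows that nothing is lost by this restriction. The remaining care is purely technical: one must handle $\Lambda=+\infty$ and the case where $\overline{x}$ could be $-\infty$, both of which are harmless because the bound is then trivial, or because Lemma~\ref{le3} is applied only under its stated hypotheses.
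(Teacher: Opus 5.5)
Your proposal is correct and follows essentially the same route as the paper: the exponential Markov inequality for $\lambda\ge 0$, the factorization $E_P[e^{\lambda nW_n}]=e^{n\Lambda(\lambda)}$ by independence, the infimum over $\lambda\ge 0$, and Lemma~\ref{le3}(ii) to identify the optimized exponent with $\Lambda^*(x)$. You are also slightly more careful than the paper in two places: you write $\le$ where the paper writes an equality of events (which fails at $\lambda=0$), and you explicitly handle the case $\Lambda(\lambda)=+\infty$.
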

\begin{proof}
	For all $\lambda\geq 0$ and $x> \overline{x}$, 
	\begin{equation}
	P(W_n\geq x) = P(e^{\lambda nW_n}\geq e^{\lambda nx})\leq e^{-\lambda nx}E_P[e^{\lambda nW_n}]=e^{-n(\lambda x-\Lambda(\lambda))}.\notag
	\end{equation}
	Hence,
	\begin{equation}
	P(W_n\geq x)\leq \inf_{\lambda\geq0}e^{-n(\lambda x-\Lambda(\lambda))}=e^{-n\cdot \underset{\lambda\geq 0}{\sup} \{\lambda x-\Lambda(\lambda)\}}.\notag
	\end{equation}
	By (ii) in Lemma \ref{le3}, we know that
	\begin{equation}
	\sup_{\lambda\geq0}\{\lambda x-\Lambda(\lambda)\}=\Lambda^*(x).\notag
	\end{equation}
	Therefore, the proof is completed.
\end{proof}

\section{A Law of Large Numbers Based on Nonlinear Expectation Theory}\label{sec3}

We establish a new LLN for special partial sums, with its proof inspired by Theorem 4.1 in \cite{Fang19}. Our theorem provides a LLN with the convergence rate in a probability space, whereas Theorem 4.1 in \cite{Fang19} pertains to a CLT with a convergence rate under the probabilistic setting as well. Theorem \ref{thm1} also serves as a foundation for analyzing and addressing the convergence rate of the detection error probability in hypothesis testing, which will be discussed in Section \ref{sec4}.

\begin{theorem}\label{thm1}
	Let $\{Z_i\}_{i=1}^\infty$ and $\{K_i\}_{i=1}^\infty$ be two i.i.d. sequences of random variables in a filtered probability space $(\Omega,\mathcal{F},P,\{\tilde{\mathcal{F}}_i\}_{i=0}^\infty)$, where $\tilde{\mathcal{F}}_0:=\sigma(\emptyset,\Omega)$, $\tilde{\mathcal{F}}_i:=\sigma(Z_1,\cdots,Z_i,K_1,\cdots,K_i)$, $i=1,2,\cdots$. We assume that $E[|Z_1|^2]=\sigma^2>0$, $E[K_1]=0$, $E[|K_1|^2]=k^2>0$, $E[|Z_1|^{2+2\alpha}]\vee E[|K_1|^{2+2\alpha}]<\infty$ for a given $\alpha\in(0,1]$,  and $\{Z_i\}_{i=1}^\infty$ is independent of $\{K_i\}_{i=1}^\infty$, where $E$ denotes the expectation with respect to the probability measure $P$. Let $\tilde{\Sigma}[\underline{\zeta} ,\overline{\zeta}]$ be the collection of all predictable sequences with respect to $\{\tilde{\mathcal{F}}_i\}_{i=0}^\infty$ that take value in $[\underline{\zeta},\overline{\zeta}]$, where $0<\underline{\zeta}<\overline{\zeta}$ are two given constants. For any sequence $\zeta:=\{\zeta_i\}_{i=1}^\infty\in\tilde{\Sigma}[\underline{\zeta} ,\overline{\zeta}]$ and $n\in\mathbb{N}$, define $Z_i^\zeta:=(\zeta_iZ_i+K_i)^2$  and $W_{i,n}^\zeta:=\frac{Z_1^\zeta+\cdots+Z_i^\zeta}{n}$ for $i=1,\cdots,n$. Then for any $\varphi\in C_{Lip}(\mathbb{R})$ with Lipschitz constant $L_\varphi$, we have
		\begin{equation}\label{eq4}
		\Bigg|\sup_{\zeta\in\tilde{\Sigma}[\underline{\zeta},\overline{\zeta}]}E[\varphi(W_{n,n}^\zeta)]-\max_{r\in[\sigma^2\underline{\zeta}^2+k^2,\sigma^2\overline{\zeta}^2+k^2]}\varphi(r)\Bigg|\leq L_\varphi\left(\frac{4\tilde{C}_\alpha}{n^\alpha}\right)^{\frac{1}{1+\alpha}},
		\end{equation}
		where $\tilde{C}_\alpha:=\underset{\zeta\in\tilde{\Sigma}[\underline{\zeta},\overline{\zeta}]}{\sup} E[|Z_1^\zeta|^{1+\alpha}] \leq 8\overline{\zeta}^{2+2\alpha} E[|Z_1|^{2+2\alpha}]+8E[|K_1|^{2+2\alpha}]<\infty$.
\end{theorem}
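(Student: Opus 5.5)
The plan is to identify the left-hand supremum with a sublinear expectation of an i.i.d. sequence, and then apply Theorem \ref{th1}. The route goes through a dynamic-programming representation, as in the measurable-approximation technique of \cite{Fang19}.

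\textbf{Step 1: the one-step sublinear distribution.} Define, for $\psi\in C_{l,Lip}(\mathbb{R})$,
\begin{equation}
\hat{\mathbb{E}}[\psi(X)]:=\sup_{c\in[\underline{\zeta},\overline{\zeta}]}E\big[\psi((cZ_1+K_1)^2)\big].\notag
\end{equation}
By Peng's product-space construction (\cite{Peng19}) there is a sublinear expectation space $(\tilde\Omega,\tilde{\mathcal H},\mathbb{E})$ with an i.i.d. sequence $\{X_i\}$, in the sense of the definitions above, whose common distribution is $\hat{\mathbb{E}}$.

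Because $K_1$ is independent of $Z_1$ and $E[K_1]=0$, we have $E[(cZ_1+K_1)^2]=c^2\sigma^2+k^2$. Hence
\begin{equation}
\mathbb{E}[X_1]=\sigma^2\overline{\zeta}^2+k^2,\qquad -\mathbb{E}[-X_1]=\sigma^2\underline{\zeta}^2+k^2 .\notag
\end{equation}
For the moment bound, use $(a+b)^{2+2\alpha}\le 2^{1+2\alpha}(|a|^{2+2\alpha}+|b|^{2+2\alpha})$ with $2^{1+2\alpha}\le 8$. This gives $\mathbb{E}[|X_1|^{1+\alpha}]\le\tilde C_\alpha<\infty$. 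In fact $\mathbb{E}[|X_1|^{1+\alpha}]=\sup_c E[|(cZ_1+K_1)^2|^{1+\alpha}]$, which equals $\tilde C_\alpha$ because $\zeta_1$ is deterministic. Theorem \ref{th1} then yields the bound \eqref{eq4} with $\mathbb{E}[\varphi(S_n/n)]$ in place of the supremum. It therefore remains to prove
\begin{equation}
\sup_{\zeta\in\tilde\Sigma[\underline\zeta,\overline\zeta]}E[\varphi(W^\zeta_{n,n})]=\mathbb{E}[\varphi(S_n/n)].\notag
\end{equation}

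\textbf{Step 2: backward recursion.} Set $u_n:=\varphi$ and, for $i=n,\dots,1$,
\begin{equation}
u_{i-1}(x):=\sup_{c\in[\underline\zeta,\overline\zeta]}E\big[u_i\big(x+(cZ_1+K_1)^2/n\big)\big].\notag
\end{equation}
By induction each $u_i$ is Lipschitz with constant $L_\varphi$, since the supremum of $L_\varphi$-Lipschitz functions is $L_\varphi$-Lipschitz. So all expectations are finite, using $E[Z_1^2],E[K_1^2]<\infty$.

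On the sublinear side, the independence of $X_i$ from $(X_1,\dots,X_{i-1})$ gives $\mathbb{E}[u_i(S_i/n)]=\mathbb{E}[u_{i-1}(S_{i-1}/n)]$. Iterating yields $\mathbb{E}[\varphi(S_n/n)]=u_0(0)$.

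\textbf{Step 3: the classical side, upper bound.} Fix $\zeta$. Condition on $\tilde{\mathcal F}_{i-1}$: $\zeta_i$ and $W^\zeta_{i-1,n}$ are $\tilde{\mathcal F}_{i-1}$-measurable, while $(Z_i,K_i)$ is independent of $\tilde{\mathcal F}_{i-1}$. Hence
\begin{equation}
E[u_i(W^\zeta_{i,n})\mid\tilde{\mathcal F}_{i-1}]=E\big[u_i\big(x+(cZ_1+K_1)^2/n\big)\big]\Big|_{x=W^\zeta_{i-1,n},\,c=\zeta_i}\le u_{i-1}(W^\zeta_{i-1,n}).\notag
\end{equation}
Taking expectations and iterating gives $E[\varphi(W^\zeta_{n,n})]\le u_0(0)$ for every $\zeta$.

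\textbf{Step 4: the classical side, lower bound (main obstacle).} For the reverse inequality we need near-optimal predictable controls. The map $c\mapsto E[u_i(x+(cZ_1+K_1)^2/n)]$ is continuous on the compact interval $[\underline\zeta,\overline\zeta]$, by dominated convergence with dominating function $C(1+|x|+Z_1^2+K_1^2)$. It is also jointly continuous in $(x,c)$.

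By a measurable selection, the supremum is then attained by a Borel function $c_i^*(x)$. As a fallback, discretize $c$ on a finite grid and take the first maximizer; this gives an $\varepsilon$-optimal Borel selector.

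Define $\zeta_i:=c_i^*(W^\zeta_{i-1,n})$ recursively. This is $\tilde{\mathcal F}_{i-1}$-measurable and $[\underline\zeta,\overline\zeta]$-valued, so $\zeta\in\tilde\Sigma[\underline\zeta,\overline\zeta]$. With this choice the inequality in Step 3 becomes an equality, up to $\varepsilon$ per step. Hence $\sup_\zeta E[\varphi(W^\zeta_{n,n})]\ge u_0(0)-n\varepsilon$. Letting $\varepsilon\to0$ gives the identity, and \eqref{eq4} follows.
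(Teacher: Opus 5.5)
Your proposal is correct and is essentially the paper's proof. The paper also identifies $\sup_{\zeta}E[\varphi(W^\zeta_{n,n})]$ with the sublinear expectation of a normalized i.i.d. sum through a one-step dynamic-programming identity: conditioning gives the upper bound, a measurable near-optimal feedback control (the measurable-approximation technique of Fang et al.) gives the lower bound, and Theorem \ref{th1} is then applied with the same constants. The differences are cosmetic: you make the backward value functions $u_i$ explicit and get the selector from a measurable maximum theorem or a grid in the control $c$ (uniformly $\varepsilon$-optimal, since the objective is Lipschitz in $c$ uniformly in $x$), whereas the paper discretizes the state variable $m$ on a $\delta$-grid and controls the error through the Lipschitz constant of $\varphi$.
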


\begin{proof}
	For fixed $n$, define $\xi_i:\mathbb{R}^n\rightarrow\mathbb{R}$ by $\xi_i(x_1,\cdots,x_n)=x_i,\ i=1,\cdots,n$. Denote $\tilde{\mathcal{H}}$ as the collection of real-valued continuous functions $\tilde{h}$ on $\mathbb{R}^n$ with $|\tilde{h}(x)|\leq C(1+|x|^{1+\alpha})$. Define a sublinear expectation as follows:
	\begin{equation}
	\tilde{\mathbb{E}}[\tilde{h}]:=\sup_{\zeta\in\tilde{\Sigma}[\underline{\zeta} ,\overline{\zeta}]}E[\tilde{h}(Z_1^\zeta,\cdots,Z_n^\zeta)],\ \forall \tilde{h}\in\tilde{\mathcal{H}}. \notag
	\end{equation}
	For $i=1,\cdots,n$, define the distribution of $\xi_i$ under $\tilde{\mathbb{E}}$ as follows:
	\begin{equation}
	\tilde{\mathcal{N}_i}[\varphi]:=\tilde{\mathbb{E}}[\varphi(\xi_i)],\ \forall\varphi(x)\in C(\mathbb{R}) \ \text{with} \ |\varphi(x)|\leq C(1+|x|^{1+\alpha}) .\notag
	\end{equation}
	Because $\{Z_i\}_{i=1}^\infty$ and $\{K_i\}_{i=1}^\infty$ are two i.i.d. sequences and $\zeta$ is a predictable sequence with respect to the filtration $\{\tilde{\mathcal{F}}_i\}_{i=0}^\infty$, we have for all $i=1,\cdots,n$ and $\varphi\in C(\mathbb{R})$ with $|\varphi(x)|\leq C(1+|x|^{1+\alpha})$,
	\begin{equation}
	\tilde{\mathcal{N}_i}[\varphi]=\sup_{x\in[\underline{\zeta}^2,\overline{\zeta}^2]}E[\varphi((xZ_1+K_1)^2)].\notag
	\end{equation}
	Therefore, $\tilde{\mathbb{E}}[\xi_i]=\underset{\zeta\in\tilde{\Sigma}[\underline{\zeta},\overline{\zeta}]}{\sup}E[Z_i^\zeta]=\sigma^2\overline{\zeta}^2+k^2$, $-\tilde{\mathbb{E}}[-\xi_i]=\underset{\zeta\in\tilde{\Sigma}[\underline{\zeta},\overline{\zeta}]}{\inf}E[Z_i^\zeta]=\sigma^2\underline{\zeta}^2+k^2$, and $\tilde{\mathbb{E}}[|\xi_i|^{1+\alpha}]=\underset{\zeta\in\tilde{\Sigma}[\underline{\zeta},\overline{\zeta}]}{\sup}E[|Z_1^\zeta|^{1+\alpha}]\leq8\overline{\zeta}^{2+2\alpha} E[|Z_1|^{2+2\alpha}]+8E[|K_1|^{2+2\alpha}]<\infty$.
	\par Set $W_i^n:=\frac{\xi_1+\cdots\xi_i}{n},\ i=1,\cdots,n$. Due to the predictability of $\zeta$, we know that for each $\zeta_i,\ i=2,\cdots,n$, there exists a Borel measurable function $g_i$ on $\mathbb{R}^{2i-2}$ such that $\zeta_i=g_i(Z_1,\cdots,Z_{i-1},K_1,\cdots,K_{i-1})$. It is not vague that we still use $\zeta_i$ to represent $g_i$, which means \begin{equation}
	\zeta_i(Z_1,\cdots,Z_{i-1},K_1,\cdots,K_{i-1})=g_i(Z_1,\cdots,Z_{i-1},K_1,\cdots,K_{i-1}),\ i=2,\cdots,n.\notag
	\end{equation}
	\par For each fixed $i$ and $\zeta\in\tilde{\Sigma}[\underline{\zeta},\overline{\zeta}]$, define $\bm{x}_{2i}=(x_1,\cdots,x_{2i})$ and
	\begin{equation}
	f^{\zeta,n}(\bm{x}_{2i}):=\frac{(\zeta_1x_1+x_{i+1})^2}{n}+\sum_{j=2}^{i}\frac{(\zeta_j(x_1,\cdots,x_{j-1},x_{i+1},\cdots,x_{i+j-1})x_j+x_{i+j})^2}{n}\text{on}\ \mathbb{R}^{2i}.\notag
	\end{equation}
	\par Next, we prove that for any $\varphi\in C_{Lip}(\mathbb{R})$ and $i=1,\cdots,n-1$,
	\begin{equation}
	\tilde{\mathbb{E}}[\varphi(W_{i+1}^n)]=\tilde{\mathbb{E}}\left[ \tilde{\mathbb{E}}\left[ \varphi\left(m+\frac{\xi_{i+1}}{n}\right)\right]\bigg|_{m=W_i^n} \right].\label{eq18}    
	\end{equation}
	On the one hand, for any $\zeta\in\tilde{\Sigma}[\underline{\zeta},\overline{\zeta}]$ and $i=1,\cdots,n-1$,
	\begin{align*}
	E[\varphi(W_{i+1,n}^\zeta)]
	&=E\left[\varphi\left(\frac{(\zeta_1Z_1+K_1)^2}{n}+\cdots+\frac{(\zeta_iZ_i+K_i)^2}{n}+\frac{(\zeta_{i+1}Z_{i+1}+K_{i+1})^2}{n} \right)  \right] \\
	&=E\left[ E\left[\varphi\left(f^{\zeta,n}(\bm{x}_{2i})+\frac{(\zeta_{i+1}(\bm{x}_{2i})Z_{i+1}+K_{i+1})^2}{n} \right)  \right]\bigg|_{\bm{x}_{2i}=(Z_1,\cdots,Z_i,K_1,\cdots,K_i)} \right]\\
	&\leq E\left[ \sup_{\lambda\in[\underline{\zeta},\overline{\zeta}]}E\left[\varphi\left(f^{\zeta,n}(\bm{x}_{2i})+\frac{(\lambda Z_{i+1}+K_{i+1})^2}{n} \right)  \right]\bigg|_{\bm{x}_{2i}=(Z_1,\cdots,Z_i,K_1,\cdots,K_i)} \right]\\
	&=E\left[\tilde{\mathbb{E}}\left[\varphi\left(f^{\zeta,n}(\bm{x}_{2i})+\frac{\xi_{i+1}}{n} \right) \right]\bigg|_{\bm{x}_{2i}=(Z_1,\cdots,Z_i,K_1,\cdots,K_i)}  \right]\\
	&=E\left[\tilde{\mathbb{E}}\left[\varphi\left(m+\frac{\xi_{i+1}}{n} \right) \right]\bigg|_{m=W_{i,n}^\zeta}  \right]\\
	&\leq\tilde{\mathbb{E}}\left[\tilde{\mathbb{E}}\left[\varphi\left(m+\frac{\xi_{i+1}}{n} \right) \right]\bigg|_{m=W_i^n}\right].
	\end{align*}
	Hence, we have
	\begin{equation}
	\tilde{\mathbb{E}}[\varphi(W_{i+1}^n)]\leq\tilde{\mathbb{E}}\left[\tilde{\mathbb{E}}\left[\varphi\left(m+\frac{\xi_{i+1}}{n} \right) \right]\bigg|_{m=W_i^n}\right].\label{eq2}
	\end{equation}
	On the other hand, we utilize the technique of measurable approximations in the proof of Theorem 4.1 in \cite{Fang19}. 
	For each $m\in\mathbb{R}$, we can find $\lambda^{\varphi,n}(m)\in[\underline{\zeta},\overline{\zeta}]$ such that
	\begin{equation}
	E\left[ \varphi\left( m+\frac{(\lambda^{\varphi,n}(m)Z_{1}+K_1)^2}{n}\right) \right] =\sup_{\lambda\in [\underline{\zeta},\overline{\zeta}]}E\left[ \varphi\left( m+\frac{(\lambda Z_{1}+K_1)^2}{n}\right) \right] =\tilde{\mathbb{E}}\left[ \varphi\left( m+\frac{\xi_{1}}{n}\right) \right].\notag
	\end{equation}
	Then, define $\Lambda(s,t,Z_1,K_1):=\varphi\left( s+\frac{(\lambda^{\varphi,n}(t)Z_{1}+K_1)^2}{n}\right) $, $\forall s,t\in\mathbb{R}$.
	For any $s,t\in \mathbb{R}$, we have
	\begin{align*} 
	&E[\Lambda(s,s, Z_1,K_1)]\\
	=&E[\Lambda(t,s, Z_1,K_1)]
	+\big(E[\Lambda(s,s, Z_1,K_1)]-E[\Lambda(t,s, Z_1,K_1)]\big)\\
	\leq&E[\Lambda(t,t, Z_1,K_1)]+ L_\varphi|t-s|\\
	=&E[\Lambda(s,t, Z_1,K_1)]
	+\big(E[\Lambda(t,t, Z_1,K_1)]-E[\Lambda(s,t, Z_1,K_1)]\big)+L_\varphi|t-s|\\    
	\leq&E[\Lambda(s,t, Z_1,K_1)]+2L_\varphi|t-s|,
	\end{align*}
	where the first inequality is by the definition of $\lambda^{\varphi,n}$, and $L_\varphi$ is the Lipschitz constant of the function $\varphi$.
	For any $\epsilon>0$, set $\delta=\frac{\epsilon}{2L_\varphi}$ and
	\[\lambda^{\varphi,n}_\epsilon(m):=\sum_{k\in\mathbb{Z}}\lambda^{\varphi,n}(k\delta)1_{[k\delta,(k+1)\delta)}(m).\]
	Then, for any $m\in \mathbb{R}$,
\begin{align*}
	E\left[ \varphi\left( m+\frac{(\lambda^{\varphi,n}_\epsilon(m)Z_{1}+K_1)^2}{n}\right) \right]
	&=\sum_{k\in\mathbb{Z}}E\left[ \varphi\left( m+\frac{(\lambda^{\varphi,n}(k\delta)Z_{1}+K_1)^2}{n}\right) \right]1_{[k\delta,(k+1)\delta)}(m)\\
	&=\sum_{k\in\mathbb{Z}}E[\Lambda(m,k\delta, Z_1,K_1)]1_{[k\delta,(k+1)\delta)}(m)\\
	&\geq\sum_{k\in\mathbb{Z}}E[\Lambda(m,m, Z_1,K_1)]1_{[k\delta,(k+1)\delta)}(m)-\epsilon\\
	&=\tilde{\mathbb{E}}\left[ \varphi\left( m+\frac{\xi_{1}}{n}\right) \right] -\epsilon.
\end{align*}
	
	\par For any $\tilde{\zeta}\in\tilde{\Sigma}[\underline{\zeta},\overline{\zeta}]$ with $\tilde{\zeta}_{i+1}(\bm{x}_{2i})=\lambda^{\varphi,n}_\epsilon(f^{\tilde{\zeta},n}(\bm{x}_{2i}))$, we have
	\begin{align*}
	E[\varphi(W^{\tilde{\zeta}}_{i+1,n})]
	&=E\left[ E\left[ \varphi\left( f^{\tilde{\zeta},n}(\bm{x_{2i}})+\frac{(\lambda^{\varphi,n}_\epsilon(f^{\tilde{\zeta},n}(\bm{x}_{2i}))Z_{i+1}+K_{i+1})^2}{n}\right) \right] \bigg|_{\bm{x_{2i}}=(Z_1,\cdots,Z_i,K_1,\cdots,K_i)}\right] \\
	&=E\left[ E\left[ \varphi\left( m+\frac{(\lambda^{\varphi,n}_\epsilon(m)Z_{1}+K_1)^2}{n}\right) \right] \bigg|_{m=W^{\tilde{\zeta}}_{i,n}}\right] \\
	&\geq E\left[ \tilde{\mathbb{E}}\left[ \varphi\left( m+\frac{\xi_{i+1}}{n}\right) \right] \bigg|_{m=W^{\tilde{\zeta}}_{i,n}}\right] -\epsilon.
	\end{align*}
	Therefore,
	\begin{equation}
	\tilde{\mathbb{E}}[ \varphi(W_{i+1}^n))]\geq\tilde{\mathbb{E}}\left[ \tilde{\mathbb{E}}\left[ \varphi\left( m+\frac{\xi_{i+1}}{n}\right) \right] \bigg|_{m=W_i^n}\right].\label{eq3}
	\end{equation} 
	Combining \eqref{eq2} and \eqref{eq3}, we obtain \eqref{eq18}. Let $\hat{\xi}_1,\cdots,\hat{\xi}_n$ be i.i.d. random variables under a sublinear expectation $\hat{\mathbb{E}}$ and $\hat{\xi}_1\overset{d}{=}\xi_1$. By \eqref{eq18}, we have, for any $\varphi\in C_{Lip}(\mathbb{R})$,
	\begin{equation}
	\tilde{\mathbb{E}}[\varphi(W_n^n)]=\hat{\mathbb{E}}\left[ \varphi\left( \frac{\hat{\xi}_1+\cdots+\hat{\xi}_n}{n}\right) \right].\notag
	\end{equation}
	Note that $\hat{\mathbb{E}}[\hat{\xi}_i]=\tilde{\mathbb{E}}[\xi_1]=\sigma^2\overline{\zeta}^2+k^2$, $-\hat{\mathbb{E}}[-\hat{\xi}_i]=-\tilde{\mathbb{E}}[-\xi_1]=\sigma^2\underline{\zeta}^2+k^2$, and $\hat{\mathbb{E}}[|\hat{\xi}_i|^{1+\alpha}]=\tilde{\mathbb{E}}[|\xi_1|^{1+\alpha}]<\infty$, for $i=1,\cdots,n$. Then, by Theorem \ref{th1}, we obtain \eqref{eq4}. This concludes the proof.
\end{proof}

Intuitively, the uncertainty in the bounded predictable sequences in Theorem \ref{thm1} leads to uncertainty in the limiting distribution. If the bounded predictable sequences reduce to a single point, the theorem degenerates to its classical counterpart. In Theorem \ref{thm1}, since $\{Z_i^2\}_{i=1}^\infty$ is a non-negative sequence with a positive mean, under the influence of the uncertain bounded predictable sequence $\{\zeta_i^2\}_{i=1}^\infty$, the distributions of the means of the partial sums $W_{n,n}^\zeta$ converge to a maximal distribution with an uncertain mean. Furthermore, because the sequence $\{K_i\}_{i=1}^\infty$ is centered (i.e., each $K_i$ has zero mean), the uncertainty from $\{\zeta_i\}_{i=1}^\infty$ is offset in its influence on the sequence $\{K_iZ_i\}_{i=1}^\infty$.

The Marcinkiewicz strong law of large numbers in classical probability theory tells us that when random variables satisfy higher integrability conditions, the strong law of large numbers can achieve a faster convergence rate. The law of large numbers presented in this paper also exhibits this pattern: when the random variables have integrability of order $2+2\alpha$, $\alpha\in(0,1]$, the law of large numbers has the convergence rate $O(1/n^{\frac{\alpha}{1+\alpha}})$.

\section{An Application to Detection Problem in Communication Systems}\label{sec4}
We give an application of Theorem \ref{thm1} to the feedback channel-based detection problem, which arises naturally in modern communication systems where the transmitted symbols are adaptively determined through a feedback channel. In such systems, the receiver must decide whether a transmission has been activated based on a sequence of received signals that are statistically dependent due to the feedback mechanism. In this setting, the input and output sequences are generally non-i.i.d. The classical i.i.d. assumption, while analytically convenient, may not fully capture the feedback-induced dependencies that often occur in practical communication scenarios. For the classical detection problem, readers may refer to \cite{Urkowitz1967energy,MF2012,Benitez2013signal}.

Our law of large numbers with the convergence rate provides a rigorous tool to characterize how fast the detection error probabilities decay with the number of observations $n$ in this non-i.i.d. environment. This application demonstrates the practical relevance of our theoretical results in a context where dependence and feedback are intrinsic. Such feedback channel-based detection arises in adaptive sensing, cognitive radio, and uplink activation in modern wireless systems, where understanding the rate of convergence directly relates to the design of reliable decision rules under realistic system dynamics.

\subsection{Models}\label{sec4.1}
Let $(\Omega,\mathcal{H},\mathbb{E})$ be a given sublinear expectation space, and let $\mathcal{P}$ be the family of probability measures representing $\mathbb{E}$ as expressed in \eqref{eq1}. We first introduce a set of assumptions on the random variables associated with the channel.

Let $n$ denotes the number of samples collected during the channel observations. Under each $P\in \mathcal{P}$, let $\epsilon:=\{\epsilon_i\}_{i=1}^{n}$ be the channel fading and $\{\epsilon_i\}_{i=1}^{n}$ is a sequence of i.i.d. random variables with $\mathbb{E}[\epsilon_1^2]=\overline{\sigma}^2, -\mathbb{E}[-\epsilon_1^2]=\underline{\sigma}^2>0$ and $\mathbb{E}[|\epsilon_1|^{2+2\alpha}]<\infty$ for some $\alpha\in(0,1]$. Let $\eta:=\{\eta_i\}_{i=1}^n$ be the existing channel noise sequence in the channel and $\{\eta_i\}_{i=1}^n$ is an i.i.d. sequence under each $P\in\mathcal{P}$, and satisfies $\mathbb{E}[\eta_1]=\mathbb{E}[-\eta_1]=0$, $\mathbb{E}[\eta_1^2]=\overline{\nu}^2$, $-\mathbb{E}[-\eta_1^2]=\underline{\nu}^2>0$ and $\mathbb{E}[|\eta_1|^{2+2\alpha}]<\infty$. Let $X:=\{X_i\}_{i=1}^n$ be the input random variables existing potentially in the channel and $X_i$ takes value in $[\underline{\zeta} ,\overline{\zeta}]$ for all $i=1,\cdots,n$, where $\underline{\zeta},\overline{\zeta}$ are two given positive constants ($0<\underline{\zeta}<\overline{\zeta}<\infty$).

Let $\mathcal{F}_0:=\sigma(\emptyset,\Omega)$, $\mathcal{F}_i:=\sigma(\epsilon_1,\cdots,\epsilon_i,\eta_1,\cdots,\eta_i) \subset \mathcal{F} ,i=1,\cdots,n$. In a feedback channel-based transmission scheme, the transmitted signal $X_i$ can depend on the past realizations of the channel. Therefore, we assume $X_i\in\mathcal{F}_{i-1},i=1,\cdots,n$, which means $\{X_i\}_{i=1}^n$ is a predictable sequence with respect to $\{\mathcal{F}_i\}_{i=0}^n$. This structure also induces dependence among the received signals, resulting in a non-i.i.d. observation process. Moreover, we assume that $\{\eta_i\}_{i=1}^n$ is independent of $\{\epsilon_i\}_{i=1}^n$ under each $P\in\mathcal{P}$, which means the channel fading and the noise are independent of each other. Let $\Sigma[\underline{\zeta} ,\overline{\zeta}]$ be the collection of all predictable sequences with respect to $\{\mathcal{F}_i\}_{i=0}^n$ that take value in $[\underline{\zeta},\overline{\zeta}]$. Let $Y:=\{Y_i\}_{i=1}^n$ be the output random variables. All these random variables are assumed to be in $\mathcal{H}$.

The feedback channel-based detection problem we consider in this paper can be formulated as a binary hypothesis-testing problem:
\begin{equation}
\begin{split}
&\mathcal{H}_0:Y_i=\epsilon_i\cdot X_i+\eta_i,\ \quad  i=1,\cdots,n,\\
&\mathcal{H}_1:Y_i=\eta_i,\ \ \ \ \ \ \ \quad \quad \ \ i=1,\cdots,n,
\end{split}
\end{equation}
The receiver observes $\{Y_i\}_{i=1}^n$ and decides between $\mathcal{H}_0$ and $\mathcal{H}_1$ based on the accumulated received energy $T_n = \sum_{i=1}^{n}Y_i^2$. A decision rule of the form
\begin{equation}
\Psi:\ \text{If} \ T_n \leq \lambda,\ \text{then reject the null hypothesis.}\notag
\end{equation}
is employed, where $\lambda$ is a detection threshold.

In this hypothesis testing framework, a \textit{missed detection} occurs when the transmitter is active and sends nonzero symbols through the channel, but the detector incorrectly selects the hypothesis $ \mathcal{H}_1$. On the other hand, a \textit{false alarm} occurs when the transmitter is inactive, but the detector incorrectly selects the hypothesis $ \mathcal{H}_0$. Therefore, it is necessary to analyze the probabilities of these two events. We use the supremum to dominate the error probabilities uniformly, that is, to dominate the case with the maximum of error probabilities under the uncertain input random variables in $\Sigma[\underline{\zeta} ,\overline{\zeta}]$ and the probability measure in $\mathcal{P}$ (the worst case). We consider the \textit{upper probability of missed detections}, that is $\underset{P\in\mathcal{P}}{\sup} \underset{X\in\Sigma[\underline{\zeta} ,\overline{\zeta}]}{\sup} P(\text{choosing}\ H_1)_{H_0}$, and the \textit{upper probability of false alarms}, i.e., $\underset{P\in\mathcal{P}}{\sup} \underset{X\in\Sigma[\underline{\zeta} ,\overline{\zeta}]}{\sup} P(\text{choosing}\ H_0)_{H_1}$. This motivates the application of our law of large numbers with the convergence rate to analyze how the error probabilities decay with $n$ under dependent observations.

\subsection{The Convergence Rates of Detection Error Probabilities}\label{sec4.2}

\begin{theorem}\label{th2}
	Under the setting in Subsection \ref{sec4.1}, given $p>0$, then a statistical test with the upper probability of missed detections not exceeding $p$ is as follows:
		\begin{equation}
		\Psi: \text{If} \  \sum_{i=1}^{n}Y_i^2 \leq n \gamma_n,\ \text{then we reject the null hypothesis.}\notag
		\end{equation}
		where 
		\begin{align}
		&\gamma_n:=\underline{\sigma}^2\underline{\zeta}^2+\underline{\nu}^2-\frac{C_{\alpha}}{p\cdot n^{\frac{\alpha}{1+\alpha}}},\label{eq19}\\
		&C_{\alpha}:=(32\overline{\zeta}^{2+2\alpha} \mathbb{E}[|\epsilon_1|^{2+2\alpha}]+32 \mathbb{E}[|\eta_1|^{2+2\alpha}])^{\frac{1}{1+\alpha}}.\label{eq20}
		\end{align}
\end{theorem}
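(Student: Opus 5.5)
Under $\mathcal{H}_0$ the missed-detection event is $\{\frac1n\sum_{i=1}^n(\epsilon_iX_i+\eta_i)^2\le\gamma_n\}$. We must show that for every $P\in\mathcal{P}$ and every $X\in\Sigma[\underline{\zeta},\overline{\zeta}]$,
\[
P\big(W^X_{n,n}\le\gamma_n\big)\le p,\qquad W^X_{n,n}:=\frac1n\sum_{i=1}^n(\epsilon_iX_i+\eta_i)^2 .
\]
The plan is to fix $P\in\mathcal{P}$ and bound the indicator of this event by a Lipschitz function. Under $P$, the sequences $\{\epsilon_i\}$ and $\{\eta_i\}$ are i.i.d. and mutually independent, and $X$ is predictable, so we are exactly in the setting of Theorem \ref{thm1} with $Z_i=\epsilon_i$, $K_i=\eta_i$ and $\zeta=X$. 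We set $\sigma_P^2:=E_P[\epsilon_1^2]\ge\underline{\sigma}^2$, $k_P^2:=E_P[\eta_1^2]\ge\underline{\nu}^2$ and $E_P[\eta_1]=0$, the latter because $\mathbb{E}[\eta_1]=\mathbb{E}[-\eta_1]=0$. The moments satisfy $E_P[|\epsilon_1|^{2+2\alpha}]\le\mathbb{E}[|\epsilon_1|^{2+2\alpha}]$, and similarly for $\eta_1$. Hence the constant of Theorem \ref{thm1} obeys
\[
\tilde C_\alpha\le 8\overline{\zeta}^{2+2\alpha}\mathbb{E}[|\epsilon_1|^{2+2\alpha}]+8\mathbb{E}[|\eta_1|^{2+2\alpha}],
\qquad\text{so}\qquad (4\tilde C_\alpha)^{\frac1{1+\alpha}}\le C_\alpha
\]
with $C_\alpha$ as in \eqref{eq20}. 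Write $a_P:=\sigma_P^2\underline{\zeta}^2+k_P^2\ge\underline{\sigma}^2\underline{\zeta}^2+\underline{\nu}^2$.

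Next, put $d_n:=\frac{C_\alpha}{p\,n^{\alpha/(1+\alpha)}}$, so that $\gamma_n\le a_P-d_n$. We may assume $\gamma_n>0$; otherwise $P(W\le\gamma_n)\le P(W<0)=0$, since $W\ge 0$ (except possibly when $\gamma_n=0$, which is handled by a limiting argument or by a strict inequality). Define the Lipschitz test function
\[
\varphi(x)=\min\Big\{1,\max\Big\{0,\tfrac{a_P-x}{d_n}\Big\}\Big\},
\]
which has Lipschitz constant $L_\varphi=1/d_n$ and satisfies $\varphi\ge 1_{\{x\le a_P-d_n\}}\ge 1_{\{x\le\gamma_n\}}$. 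Since $\varphi$ is nonincreasing, $\varphi=0$ on $[a_P,\infty)$, so $\max_{r\in[a_P,\,\sigma_P^2\overline{\zeta}^2+k_P^2]}\varphi(r)=0$. Theorem \ref{thm1} then gives
\[
P(W^X_{n,n}\le\gamma_n)\le E_P[\varphi(W^X_{n,n})]\le\sup_{\zeta}E_P[\varphi(W^\zeta_{n,n})]\le \frac1{d_n}\Big(\frac{4\tilde C_\alpha}{n^\alpha}\Big)^{\frac1{1+\alpha}}\le\frac{C_\alpha}{d_n\,n^{\alpha/(1+\alpha)}}=p.
\]
Taking the supremum over $X$ and then over $P$ finishes the argument.

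The main obstacle is justifying the application of Theorem \ref{thm1} under each fixed $P$. That theorem requires the predictable class to be defined with respect to the natural filtration generated by $Z$ and $K$, and $\mathcal{F}_i$ is exactly this filtration, so that requirement matches. It also requires $\sigma_P^2>0$ and $k_P^2>0$, which follow from the lower bounds $\underline{\sigma}^2,\underline{\nu}^2>0$. Finally, it requires finite moments of order $2+2\alpha$ under $P$, which are dominated by the upper expectation. One must also check that the i.i.d. and independence assumptions hold classically under each $P$; they are stipulated in Subsection \ref{sec4.1}.
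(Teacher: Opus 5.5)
Your proof is correct and follows the paper's argument. You dominate the indicator by a piecewise-linear function of slope $1/d_n$ that vanishes on the maximal-distribution interval, apply Theorem \ref{thm1} under each fixed $P$, and bound the $P$-moments by upper expectations to obtain $C_\alpha$. The differences are only cosmetic: you anchor the ramp at the $P$-dependent point $a_P$ rather than at $\underline{\sigma}^2\underline{\zeta}^2+\underline{\nu}^2$, and your case split on the sign of $\gamma_n$, including the vague remark about $\gamma_n=0$, is unnecessary, since $\varphi\ge 1_{\{x\le\gamma_n\}}$ already holds for every real $\gamma_n$.
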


\begin{proof}
	First, We consider the statistical test $\Psi$ as the following form:
	\begin{equation}
	\Psi: \ \text{If}\ \ \ \sum_{i=1}^{n}Y_i^2\leq n\gamma,\notag
	\end{equation}
	then reject $H_0$, or accept (rejecting $H_1$) otherwise. Here, $\gamma$ is the predefined threshold. Next, we shall determine $\gamma$ in terms of the level of the upper probability of missed detections.
	\par Note that the upper probability of missed detections is
	\begin{equation}
	\sup_{P\in\mathcal{P}}\sup_{X\in\Sigma[\underline{\zeta},\overline{\zeta }]}P\left( \frac{\sum_{i=1}^{n}Y_i^2}{n}\leq \gamma\right)_{H_0} = \sup_{P\in\mathcal{P}}\sup_{X\in\Sigma[\underline{\zeta},\overline{\zeta}]}P\left( \frac{\sum_{i=1}^{n}(\epsilon_iX_i+\eta_i)^2}{n}\leq \gamma\right).\notag
	\end{equation}
	Let $\gamma<\underline{\sigma}^2\underline{\zeta}^2+\underline{\nu}^2$, and define 
	\begin{equation}
	\varphi_1(x)=\left\{
	\begin{aligned}
	&1,&&x\leq  \gamma,\\
	&-\frac{1}{\underline{\sigma}^2\underline{\zeta}^2+\underline{\nu}^2-\gamma}x+1+\frac{\gamma}{\underline{\sigma}^2\underline{\zeta}^2+\underline{\nu}^2- \gamma},&& \gamma<x\leq \underline{\sigma}^2\underline{\zeta}^2+\underline{\nu}^2,\\
	&0,&&x>\underline{\sigma}^2\underline{\zeta}^2+\underline{\nu}^2.\notag
	\end{aligned}
	\right.
	\end{equation}
	It's clear that $\varphi_1(x)\geq I_{(-\infty,\gamma]}(x)$, $x\in\mathbb{R}$ and $\varphi_1=0$ on $[\underline{\sigma}^2\underline{\zeta}^2+\underline{\nu}^2,\overline{\sigma}^2\overline{\zeta}^2+\overline{\nu}^2]$. For each fixed $P\in\mathcal{P}$, noting that $E_P[\epsilon_1^2]\underline{\zeta}^2+E_P[\eta_1^2] \geq \underline{\sigma}^2\underline{\zeta}^2+\underline{\nu}^2$ and $E_P[\epsilon_1^2]\overline{\zeta}^2+E_P[\eta_1^2] \leq \overline{\sigma}^2\overline{\zeta}^2+\overline{\nu}^2$, letting $\varphi=\varphi_1$ and $\zeta_i=X_i$, $Z_i=\epsilon_i$, $K_i=\eta_i$ for $i=1,\cdots,n$ in Theorem \ref{thm1}, we have  
	\begin{align}
	&\sup_{X\in\Sigma[\underline{\zeta},\overline{\zeta}]}E_P\left[I_{(-\infty,\gamma]}\left(\frac{\sum_{i=1}^{n}(\epsilon_iX_i+\eta_i)^2}{n} \right)\right]\notag\\
	\leq&\sup_{X\in\Sigma[\underline{\zeta},\overline{\zeta}]}E_P\left[\varphi_1\left(\frac{\sum_{i=1}^{n}(\epsilon_iX_i+\eta_i)^2}{n} \right)\right]\notag\\
	\leq&\frac{1}{\underline{\sigma}^2\underline{\zeta}^2+\underline{\nu}^2-\gamma}\left(\frac{32\overline{\zeta}^{2+2\alpha} E_P[|\epsilon_1|^{2+2\alpha}]+32 E_P[|\eta_1|^{2+2\alpha}]}{n^\alpha}\right)^{\frac{1}{1+\alpha}}\notag\\
	\leq&\frac{1}{\underline{\sigma}^2\underline{\zeta}^2+\underline{\nu}^2-\gamma}\left(\frac{32\overline{\zeta}^{2+2\alpha} \mathbb{E}[|\epsilon_1|^{2+2\alpha}]+32 \mathbb{E}[|\eta_1|^{2+2\alpha}]}{n^\alpha}\right)^{\frac{1}{1+\alpha}}.\notag
	\end{align}
	Therefore, 
	\begin{equation}
	\sup_{P\in\mathcal{P}}\sup_{X\in\Sigma[\underline{\zeta},\overline{\zeta }]}P\left( \frac{\sum_{i=1}^{n}Y_i^2}{n}\leq \gamma\right)_{H_0} \leq \frac{1}{\underline{\sigma}^2\underline{\zeta}^2+\underline{\nu}^2-\gamma}\left(\frac{32\overline{\zeta}^{2+2\alpha} \mathbb{E}[|\epsilon_1|^{2+2\alpha}]+32 \mathbb{E}[|\eta_1|^{2+2\alpha}]}{n^\alpha}\right)^{\frac{1}{1+\alpha}}.\label{eq9}
	\end{equation}                         
	By letting the right term of \eqref{eq9} equal to $p$, we obtain the statistical test $\Psi$.
\end{proof}

\begin{remark}\label{remark4.1}
	In some traditional detection problems with i.i.d. normal inputs, the outputs $\{Y_i\}_{i=1}^n$ also form an i.i.d. normal sequence. A common method is to use CLT to approximate $\sum_{i=1}^{n}Y_i^2$ as a random variable that follows a normal distribution for the subsequent estimation of error probabilities in hypothesis testing. However, the error introduced by this distribution approximation is not considered. Neglecting the error from the approximation may lead to an actual collected sample size that is smaller than the theoretically required sample size needed to achieve a given significance level. In other words, the actual sample size collected may therefore be insufficient to attain the specified significance level.
\end{remark}  

Recall that for each fixed $P\in\mathcal{P}$, the logarithmic moment generating function of $\eta_1^2$ under $E_P$ is defined by
\begin{equation}
\Lambda_P(\lambda):=\log E_P[e^{\lambda\eta_1^2}],\ \lambda\in \mathbb{R}.\notag
\end{equation} 
The Fenchel-Legendre transform of $\Lambda_P$ is the function defined by 
\begin{equation}
\Lambda_P^*(x):=\sup_{\lambda\in\mathbb{R}}\{\lambda x-\Lambda_P(\lambda)\},\ x\in \mathbb{R}.\notag
\end{equation}
The following theorem gives an asymptotic estimation of the upper probability of false alarms with respect to the number of samples $n$.
\begin{theorem}\label{th4.2}
	Under the setting of Theorem \ref{th2}, if $\underline{\sigma}^2\underline{\zeta}^2>\overline{\nu}^2$,  
	\begin{equation}
	\overline{C}:=\sup_{P\in\mathcal{P}}\left\lbrace  -\Lambda_P^*\left( \frac{\underline{\sigma}^2\underline{\zeta}^2+\overline{\nu}^2}{2}\right) \right\rbrace <0,\notag
	\end{equation}
	and $\Lambda_P(\lambda_P)<\infty$ for some $\lambda_P>0$ under each $P\in\mathcal{P}$,
	then 
	\begin{equation}
	\sup_{P\in\mathcal{P}}P\left(\sum_{i=1}^n\eta_i^2> n \gamma_n \right)=O\left( e^{\overline{C}n}\right) ,\label{eq16}
	\end{equation}
	which means the upper probability of false alarms of the statistical test $\Psi$ decays to zero exponentially as $n\rightarrow\infty$.
\end{theorem}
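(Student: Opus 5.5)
Under $\mathcal{H}_1$ we have $Y_i=\eta_i$, so the upper probability of false alarms is $\sup_{P\in\mathcal{P}}P\big(\frac1n\sum_{i=1}^n\eta_i^2>\gamma_n\big)$. The approach is to compare the threshold $\gamma_n$ with the fixed midpoint
\[
x_0:=\frac{\underline{\sigma}^2\underline{\zeta}^2+\overline{\nu}^2}{2},
\]
and then apply the Chernoff bound of Lemma \ref{le4} to the i.i.d. sequence $\{\eta_i^2\}$ under each fixed $P$. The constant hidden in $O(\cdot)$ should not depend on $P$, so the exponent must be bounded uniformly using the definition of $\overline{C}$.

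\textbf{Step 1 (threshold).} By \eqref{eq19},
\[
\gamma_n=\underline{\sigma}^2\underline{\zeta}^2+\underline{\nu}^2-\frac{C_\alpha}{p\,n^{\alpha/(1+\alpha)}}.
\]
Since $\underline{\nu}^2>0$ and the correction term tends to $0$, we get $\gamma_n\to\underline{\sigma}^2\underline{\zeta}^2+\underline{\nu}^2>\underline{\sigma}^2\underline{\zeta}^2>x_0$. The last inequality uses the hypothesis $\underline{\sigma}^2\underline{\zeta}^2>\overline{\nu}^2$. Hence there is $N$, independent of $P$, with $\gamma_n\ge x_0$ for all $n\ge N$. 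For such $n$, monotonicity of probability gives
\[
P\Big(\tfrac1n\textstyle\sum_{i=1}^n\eta_i^2>\gamma_n\Big)\le P\Big(\tfrac1n\textstyle\sum_{i=1}^n\eta_i^2\ge x_0\Big).
\]

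\textbf{Step 2 (Chernoff under each $P$).} Fix $P\in\mathcal{P}$. The variables $\eta_i^2$ are i.i.d. under $P$ with mean $\overline{x}_P:=E_P[\eta_1^2]\le\mathbb{E}[\eta_1^2]=\overline{\nu}^2<x_0$. By hypothesis $\Lambda_P(\lambda_P)<\infty$ for some $\lambda_P>0$, so Lemma \ref{le4} applies with $x=x_0>\overline{x}_P$ and yields
\[
P\Big(\tfrac1n\textstyle\sum_{i=1}^n\eta_i^2\ge x_0\Big)\le e^{-n\Lambda_P^*(x_0)}.
\]
By the definition of $\overline{C}$ we have $-\Lambda_P^*(x_0)\le\overline{C}$ for every $P$, so this bound is at most $e^{\overline{C}n}$. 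Taking the supremum over $P$ gives $\sup_P P(\sum_{i=1}^n\eta_i^2>n\gamma_n)\le e^{\overline{C}n}$ for all $n\ge N$. The finitely many $n<N$ are absorbed into the constant, since probabilities are bounded by $1$. This proves \eqref{eq16}, and exponential decay follows because $\overline{C}<0$.

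\textbf{Main obstacle.} The only delicate point is uniformity in $P$. Step 1 handles the threshold uniformly, because $N$ depends only on the sublinear quantities $\underline{\sigma},\underline{\zeta},\underline{\nu},C_\alpha,p$. The exponent is made uniform by the assumption $\overline{C}<0$, which is exactly what prevents $\Lambda_P^*(x_0)$ from degenerating to $0$ along a family of $P$. One must also check that Lemma \ref{le4} requires $x_0$ to lie strictly above the $P$-mean. This holds uniformly because $E_P[\eta_1^2]\le\overline{\nu}^2<x_0$.
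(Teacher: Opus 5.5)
Your proposal is correct and follows essentially the same route as the paper: you show that $\gamma_n$ eventually exceeds the midpoint $\frac{\underline{\sigma}^2\underline{\zeta}^2+\overline{\nu}^2}{2}$ for all $n$ beyond an index independent of $P$. You then apply Lemma \ref{le4} under each fixed $P$, which gives the bound $e^{-n\Lambda_P^*(\cdot)}\le e^{\overline{C}n}$ uniformly in $P$, and your explicit absorption of the finitely many $n<N$ into the $O(\cdot)$ constant is a detail the paper leaves implicit.
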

\begin{proof}
	The upper probability of false alarms is
	\begin{equation}
	\sup_{P\in\mathcal{P}}\sup_{X\in\Sigma[\underline{\zeta},\overline{\zeta}]}P\left(\sum_{i=1}^{n}Y_i^2>n \gamma_n\right)_{H_1}=\sup_{P\in\mathcal{P}}P\left(\frac{\sum_{i=1}^{n}\eta_i^2}{n}>\gamma_n\right).\notag
	\end{equation}
	By the definition of $\gamma_n$ in \eqref{eq19}, we know that there exist a $N_1\in\mathbb{N}$ such that $\gamma_n>\frac{\underline{\sigma}^2\underline{\zeta}^2+\overline{\nu}^2}{2}$ for $n\geq N_1$.
	Then, for any $P\in\mathcal{P}$, there is
	\begin{equation}
	P\left(\frac{\sum_{i=1}^{n}\eta_i^2}{n}>\gamma_n\right)\leq P\left(\frac{\sum_{i=1}^{n}\eta_i^2}{n}\geq\frac{\underline{\sigma}^2\underline{\zeta}^2+\overline{\nu}^2}{2}\right),\ n\geq N_1.\notag
	\end{equation}
	For each fixed $P\in\mathcal{P}$, due to $\frac{\underline{\sigma}^2\underline{\zeta}^2+\overline{\nu}^2}{2}>\overline{\nu}^2\geq E_p[\eta_1^2]$, then by Lemma \ref{le4}, we know that
	\begin{equation}
	P\left(\frac{\sum_{i=1}^{n}\eta_i^2}{n}\geq\frac{\underline{\sigma}^2\underline{\zeta}^2+\overline{\nu}^2}{2}\right)\leq e^{-n\Lambda_P^*\left( \frac{\underline{\sigma}^2\underline{\zeta}^2+\overline{\nu}^2}{2}\right) }\leq e^{\overline{C}n}.\notag
	\end{equation}
	Hence, we have
	\begin{equation}
	\sup_{P\in\mathcal{P}} P\left(\frac{\sum_{i=1}^{n}\eta_i^2}{n}>\gamma_n\right)
	\leq 
	e^{\overline{C}n},\ n\geq N_1,\notag
	\end{equation}
	which means \eqref{eq16} holds.
\end{proof}

\subsection{An Example}\label{sec4.3}

The following is an example when the noise follows a normal distribution, but its variance varies due to the uncertainty of probability measures.
\begin{example}\label{example4.1}
	Asuume the noise $\eta_i\sim N(0,\sigma_P^2)$ for each $i=1,2,...$ under each $P\in\mathcal{P}$, and
	\begin{equation}
	\mathbb{E}[\eta_1^2]=\sup_{P\in\mathcal{P}}\sigma_P^2=\overline{\nu}^2,\  -\mathbb{E}[-\eta_1^2]=\inf_{P\in\mathcal{P}}\sigma_P^2=\underline{\nu}^2,\  \overline{\nu}^2<\underline{\sigma}^2\underline{\zeta}^2. \notag
	\end{equation} 
	By a simple calculation,
	\begin{equation}
	\tilde{\Lambda}_P(\lambda):=\log E_P[e^{\lambda\eta_1^2}]=\left\{
	\begin{aligned}
	&+\infty,&&\lambda\geq\frac{1}{2\sigma_P^2},\\
	&-\frac{1}{2}\log(1-2\lambda\sigma_P^2),&&\lambda<\frac{1}{2\sigma_P^2},\notag
	\end{aligned}
	\right.
	\end{equation}
	\begin{equation}
	\tilde{\Lambda}_P^*(x):=\sup_{\lambda\in\mathbb{R}}\{\lambda x-\tilde{\Lambda}_P(\lambda)\}=\left\{
	\begin{aligned}
	&\frac{1}{2\sigma_P^2}x-\frac{1}{2}\log x+\log\sigma_{P}-\frac{1}{2},&&x>0,\\
	&+\infty,&&x\leq0.\notag
	\end{aligned}
	\right.
	\end{equation}
	It's clear that 
	\begin{equation}
	0=\tilde{\Lambda}_P^*(\sigma_P^2)=\inf_{x\in\mathbb{R}}\tilde{\Lambda}_P^*(x),\notag
	\end{equation}
	and $\tilde{\Lambda}_P^*(x)$ is nonincreasing on $(-\infty,\sigma_P^2)$ and nondecreasing on $(\sigma_P^2,\infty)$.\\
	By Lemma \ref{le4}, 
	\begin{equation}
	P\left(\frac{\sum_{i=1}^{n}\eta_i^2}{n}\geq\frac{\underline{\sigma}^2\underline{\zeta}^2+\overline{\nu}^2}{2}\right)\leq e^{-n\tilde{\Lambda}_P^*\left( \frac{\underline{\sigma}^2\underline{\zeta}^2+\overline{\nu}^2}{2}\right) }.\notag
	\end{equation}
	Note that 
	\begin{align*}
	\sup_{P\in\mathcal{P}}\left( -\tilde{\Lambda}_P^*\left( \frac{\underline{\sigma}^2\underline{\zeta}^2+\overline{\nu}^2}{2}\right) \right)
	=&-\inf_{\sigma_P^2\in[\underline{\nu}^2,\overline{\nu}^2]}\tilde{\Lambda}_P^*\left( \frac{\underline{\sigma}^2\underline{\zeta}^2+\overline{\nu}^2}{2}\right)\notag\\
	=&\frac{1}{4}+\frac{1}{2}\log\left(\frac{\underline{\sigma}^2\underline{\zeta}^2+\overline{\nu}^2}{2} \right)-\log\overline{\nu}-\frac{\underline{\sigma}^2\underline{\zeta}^2}{4\overline{\nu}^2}=:\hat{C}<0,\label{eq21}
	\end{align*}
	Therefore, 
	\begin{equation}
	\sup_{P\in\mathcal{P}} P\left(\frac{\sum_{i=1}^{n}\eta_i^2}{n}>\gamma_n\right)=O\left( e^{\hat{C}n}\right).\notag
	\end{equation}
\end{example}

\end{document}